\newtheorem{thm}{Theorem}
\newtheorem{lem}{Lemma}
\newtheorem{prop}{Proposition}
\newtheorem{rem}{Remark}
\newtheorem{defi}{Definition}
\DeclareMathOperator*{\esssup}{ess \, sup}
\newcommand{\dx}{\, \mathrm d x}
\newcommand{\dy}{\, \mathrm d y}
\newcommand{\R}{\mathbb R}
\newcommand{\eps}{\varepsilon}
\newcommand{\vertiii}[1]{{\left\vert\kern-0.25ex\left\vert\kern-0.25ex\left\vert #1
    \right\vert\kern-0.25ex\right\vert\kern-0.25ex\right\vert}}
\title{\LARGE \bf Opinion dynamics on signed graphs and graphons:\\ Beyond the piece-wise constant case
}
\author{Raoul Prisant, Federica Garin, Paolo Frasca  
\thanks{This work has been partly supported by the French National Research Agency through grant COCOON ANR-22-CE48-0011.}
\thanks{All authors are with Univ.\ Grenoble Alpes, CNRS, Inria, Grenoble INP, GIPSA-lab, 38000 Grenoble, France (e-mails: paolo.frasca@gipsa-lab.fr; federica.garin@inria.fr; raoul.prisant@gipsa-lab.fr).} 
}
\begin{document}

\maketitle

\thispagestyle{empty}
\pagestyle{empty}

\begin{abstract}
    In this paper we make use of graphon theory to study opinion dynamics on large undirected networks. The opinion dynamics models that we take into consideration allow for negative interactions between the individuals, i.e.\ competing entities whose opinions can grow apart. We consider both the repelling model and the opposing model that are studied in the literature. We define the repelling and the opposing dynamics on graphons and we show that their initial value problem's solutions exist and are unique. We then show that the graphon dynamics well approximate the dynamics on large graphs that converge to a graphon. This result applies to large random graphs that are sampled according to a graphon. All these facts are illustrated in an extended numerical example. 
\end{abstract}

\begin{keywords}
Graphons; graph Laplacian; opinion dynamics; networks; multi-agent systems
\end{keywords}

\section{Introduction}
For more than a decade, the control community has been interested in mathematical models of opinion dynamics on social networks. 
This body of research has been summarized by several survey papers~\cite{proskurnikov2017tutorial,proskurnikov2018tutorial,shi2019dynamics,tian2023dynamics,bernardo2024bounded}. A key question has been to determine whether or not in the long run the individuals in the social network reach a consensus, whereby their opinions are in agreement~\cite{friedkin2015problem}. A popular feature of non-agreement models is the presence of negative, or antagonistic, interactions between individuals~\cite{altafini2012consensus,shi2019dynamics}. Indeed, negative interactions counter the positive interactions that are commonplace in social influence models and that lead individuals to conform with their peers.

Opinion dynamics take place on social networks that can be very large: therefore, our mathematical methods need be able to cope with large networks. In this perspective of scalability, the control community has become interested, in the last few years, in methods that represent network dynamics by the evolution of continuous variables~\cite{gao2017control,gao2019graphon,nikitin2021continuation,maffettone2022continuification}. A useful instrument in this perspective is the theory of graph limits, and particularly the notion of graphon, which was developed in the 2000s and is thoroughly presented in~\cite{lovasz2012book}. Using graphons to study dynamical systems on networks is currently a very active area of research~\cite{gao2019graphon,avella2018centrality,vizuete2020graphon,belabbas2021h,petit2021random,bonnet2022consensus,chen2023large}.

In this paper we consider opinion dynamics on signed graphs and we define their graphon counterparts. More precisely, we consider both the so-called~\cite{shi2019dynamics} {\em repelling model} and the {\em opposing model}, two possible extensions of the classical French-DeGroot model that allow the agents to have negative interactions.
%
We give three contributions regarding these two dynamics on graphs and graphons. First, we prove the existence and uniqueness of the solutions to these dynamics on signed graphons (Theorem~\ref{thm:existence-and-unicity}). 
Second, we prove sufficient conditions for the solutions on signed graphs to converge, as the number of individuals $n$ goes to infinity, to solutions on signed graphons, as long as the sequence of graphs converges to a graphon (the convergence error is estimated in Theorem~\ref{thm:bound}). These sufficient conditions are general enough to apply to relevant cases, including sequences of random graphs sampled from piece-wise Lipschitz graphons (our third contribution, Theorem~\ref{thm:sampled-convergence}).

\paragraph*{Related work} 
Several papers have considered opinion dynamics on graphons, related Laplacian-based dynamics on graphons, or  the properties of graphon Laplacians.
The overwhelming majority of works assume
nonnegative interactions, including~\cite{vonluxburg2008consistency,vizuete2021laplacian,bramburger2023pattern,petit2021random}.
Papers~\cite{bonnet2022consensus,ayi2021mean} contains a study of graphon opinion dynamics with nonnegative, but possibly time-varying, interactions. Distributed optimization, which is closely related to consensus dynamics, is studied in~\cite{chen2023large}. Some works have also considered more general nonlinear dynamics on networks. For instance,  the well-known work by Medvedev~\cite{medvedev2014nonlinear,medvedev2014Wrandom}, on which much of our analysis is based, 
covers nonlinear dynamics such as the Kuramoto model of synchronization. In fact, our proof arguments for Theorems~\ref{thm:existence-and-unicity} and~\ref{thm:bound} are adaptations of Medvedev's~\cite{medvedev2014nonlinear}, in which we restrict the analysis to linear dynamics and extend it to signed graphons and sampled graphs.

Signed graphons have been sometimes considered in the literature~\cite{lovasz2011subgraph}, but rarely as a dynamical model, even though the interest of large-scale signed social networks has been recognized for quite some time~\cite{facchetti2011computing}.
Recently, the paper \cite{aletti2022opinion} has considered the repelling dynamics on signed graphs and has proved existence of solutions for piece-wise constant graphons. 
Our results are more general than the latter as they cover the opposing dynamics, include explicit approximation bounds, consider the case of random sampled graphs, and make weaker assumptions on the regularity of the graphons (no assumption for Theorem~\ref{thm:bound} and piece-wise Lipschitz continuity for Theorem~\ref{thm:sampled-convergence}). 

\paragraph*{Outline}
In Section~\ref{sect:opinion-dynamics}, we recall the relevant models of opinion dynamics on signed graph and we define their graphon counterparts. We then prove that the latter have complete classical solutions from any initial condition. In Section~\ref{sect:convergence-solutions}, we prove sufficient conditions for the solutions of the dynamics on graphs of size $n$ to converge, when $n\to\infty$, to solutions of the graphon dynamics. In Section~\ref{sect:convergence-sampled} we show that  our convergence conditions apply to  large random graphs sampled from signed graphons (so long as the graphon is Lipschitz). 
Finally, Section~\ref{sect:simulations} contains an illustrative example and Section~\ref{sect:conclusion} reports some concluding remarks.

\section{Opinion dynamics on graphs and graphons}\label{sect:opinion-dynamics}

In this section, we recall established models of opinion dynamics on signed graph and we define their counterparts on signed graphons. On our way to the latter, we recall the necessary preliminaries about graphons and about random graphs sampled from graphons.

\subsection{Opinion dynamics on signed networks}
Opinion dynamics study the evolution of the opinions of interacting individuals~\cite{proskurnikov2017tutorial,proskurnikov2018tutorial}. The interaction network is modeled by an undirected graph $G^{(n)}$, where the nodes $v_i$ represent the individuals and the edges $e_{ij}$ represent an interaction between $v_i$ and $v_j$. In general, a weight matrix $A^{(n)}$ 
can be assigned to determine the strength of each connection.
For the scope of this paper, we will take symmetric matrices $A^{(n)}$ with entries $A_{ij}^{(n)}\in\{-1,0,1\}$, where value $1$ represents a positive interaction, the two individuals respect each other and as such their opinions grow closer, value $0$ means lack of interaction and value $-1$ represents the interaction between individuals who dislike each other and whose opinions tend to diverge. 

Two relevant ways of defining an opinion dynamics associated to such signed graphs have been considered in the literature~\cite[Sect.~2.5.2]{shi2019dynamics}. 
The first is the so-called {\em repelling} model, in which the opinion of each node $i$ is modeled as $u^{(n)}_i\in\mathbb{R}$, and whose evolution in time is determined by the differential equation
\begin{equation}\label{DynGraphRep}
    \begin{cases}
        \dot{u}^{(n)}_i(t)=\displaystyle\frac{1}{n}\sum_{j=1}^n   A_{ij}^{(n)}(u^{(n)}_j(t)-u^{(n)}_i(t)), \\ 
        u^{(n)}_i(0)=g_i,
    \end{cases}
\end{equation}
where $n$ is the number of nodes and $g_i$ the initial opinion of node $i$. 

The second is the {\em opposing} model, also known as Altafini model, in which the evolution in time of the opinion is instead described by the equation
\begin{equation}\label{DynGraphOpp}
    \begin{cases}
        \dot{u}^{(n)}_i(t)= \displaystyle\frac{1}{n}(\sum_{j=1}^n   A_{ij}^{(n)}u^{(n)}_j(t)-\sum_{j=1}^n|A_{ij}|u^{(n)}_i(t)), \\ 
        u^{(n)}_i(0)=g_i.
    \end{cases}
\end{equation}
Following~\cite{shi2019dynamics}, we observe that the positive interactions are consistent with the classical DeGroot's rule of social interactions,
which postulates that the opinions of trustful social members are attractive to each
other~\cite{degroot1974reaching}. Along a negative link, the opposing rule~\cite{altafini2012consensus} postulates that the interaction will drive a node state to
be attracted by the opposite of its neighbor's state; the repelling rule~\cite{shi2013agreement} indicates
that the two node states will repel each other instead of being attractive.

\subsection{Graphs and graphons}
When studying opinion dynamics on a society scale, the dimension of the graph becomes too large to deal with. To approach this problem, we make use of {\em graphons}. Throughout this paper, we will assume graphons to be symmetric measurable functions $W:[0,1]^2\rightarrow [-1,1]$. We shall instead refer to bounded symmetric measurable functions taking values in $\R$ simply as kernels.

For every kernel $W$, we can define an integral operator $T_W : L^2[0,1]\rightarrow L^2[0,1]$ by:
\[
\left(T_Wf\right)(x):=\int_0^1W(x,y)f(y) \dy .
\]
Notice that $T_{W_1+W_2}=T_{W_1}+T_{W_2}$ for any two kernels $W_1$, $W_2$.
We will equip the space of kernels with the operator norm defined as
\[
\vertiii{T_W}:=\sup_{\|f\|_2=1}\|T_Wf\|_2.
\]

As a tool to compare graphs of different sizes and graphons, it is useful to represent a graph as a piece-wise constant graphon.  To do this, we partition the interval $I=[0,1]$ in $n$ intervals $I_i=(\frac{i-1}{n},\frac{i}{n}]$ for $i=1,\dots,n$, and set $W_n(x,y)=A^{(n)}_{ij},$ for all $(x,y)\in I_i\times I_j$. 
Informally, we can then see a graphon as the limit of a sequence of graphs, as for $n\rightarrow\infty$ the interval of length $\frac{1}{n}$ corresponding to each node has size that goes to 0, so that $[0,1]$ becomes a continuum of nodes.

To make this idea precise, in this paper we will use the distance induced by the operator norm. Hence, we will say that a sequence of graphs (seen as its associated graphons $W_{n}$) converges to a graphon $W$ if $\vertiii{T_{W_n}-T_W}$ goes to zero. In fact, it turns out  that this convergence notion is equivalent to the notion induced by the more popular cut norm; see~\cite[Lemma E.6]{janson2013graphons}. Other (non-equivalent) norms are also used in the literature, such as the $L^2$ norm $\|W_n-W\|_2$ that is used in~\cite{medvedev2014nonlinear}.  

\subsection{Sampled graphs}
The graph sequences of main interest for this paper are sequences of sampled graphs. The latter are widely studied in the literature for graphons with values in $[0,1]$. For values in $[-1,1]$, the natural extension is the following~\cite{borgs2019Lp}.

\begin{defi}[Sampled signed graphs]
    Given a graphon $W$ and a set of points $X_i\in[0,1]$ with $i=1,\ldots,n$, a sampled (signed) graph is a random graph with adjacency matrix $A^{(n)}$ such that
    $$
        A^{(n)}_{ij}=\text{sign}(W(X_i,X_j))\cdot \eta_{ij},
    $$
    where $\eta_{ij}\sim Ber(|W(X_i,X_j)|)$ are independent Bernoulli random variables, for all $i < j$. \label{sampledgraph_negativeweights}
\end{defi}

When we sample from a graphon $W$ with positive values, we have, for instance, that $W(X_i,X_j)=0.5$ corresponds to a 50\% chance of edge $(i,j)$ existing (i.e. $A^{(n)}_{ij}=1$). The definition is then natural extension to negative values, where $W(X_i,X_j)=-0.5$ means there is a 50\% chance of a negative edge existing between nodes $i$ and $j$.

When convenient, with a slight abuse of vocabulary, we will call `graph sampled from the graphon $W$' also the piece-wise constant graphon 
$W_n(x,y)=\sum_{i,j} A^{(n)}_{ij}\mathds{1}_{I_i}(x)\mathds{1}_{I_j}(y)$ associated with the actual graph.

For the set of points $\{X_i\}$, we consider two possible definitions.
\begin{defi}[Latent variables]
For each $n$, consider $n$ points $X_1, \dots, X_n$ in the interval $I$, defined as follows:
\label{latent_var}
    \begin{enumerate}
        \item Deterministic latent variables: $X_i=\frac{i}{n}$, \label{dlv}
        \item Stochastic latent variables: $X_i = U_{(i)}$, where $U_1, \dots, U_n$ are i.i.d.\ uniform in $I$ and $U_{(1)}, \dots, U_{(n)}$ is the corresponding order statistics.\label{slv}
    \end{enumerate}
\end{defi}

\subsection{Opinion dynamics on signed graphons}
In parallel with the graph case, for the {\em repelling} model we can define the dynamics on the graphon as 
\begin{equation} \label{DynGraphonRep}
\begin{cases}
    \displaystyle\frac{\partial u}{\partial t}(x,t)=\displaystyle\int_{I}W(x,y)(u(y,t)-u(x,t))dy  \\
    u(x,0)=g(x),
\end{cases}
\end{equation}
where $u:[0,1]\times\R^+ \to \R$ and $g:[0,1]\to\R$ is the initial opinion distribution, whereas for the {\em opposing} model, we can write 
\begin{equation}\label{DynGraphonOpp}
\begin{cases}\displaystyle
    \frac{\partial u}{\partial t}(x,t)=\int_{I}W(x,y)u(y,t)\dy   - \int_{I}|W(x,y)|u(x,t)dy  \\
    u(x,0)=g(x).
\end{cases}
\end{equation}

Existence and uniqueness of solutions of \eqref{DynGraphonRep} is given in
\cite[Theorem~3.2]{medvedev2014nonlinear}; a slight modification of its proof allows to obtain the same for \eqref{DynGraphonOpp} as well.
\begin{thm}[Existence and uniqueness]  \label{thm:existence-and-unicity}
Assume that $W\in L^{\infty}(I^2)$ and $g\in L^{\infty}(I)$. Then 
there exists a unique solution of \eqref{DynGraphonRep}, $u\in C^1(\mathbb{R}^+;L^{\infty}(I))$.
The same holds for \eqref{DynGraphonOpp}.
\end{thm}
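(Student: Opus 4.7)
The plan is to recast each graphon initial value problem as an abstract linear ODE on the Banach space $L^{\infty}(I)$ and run a standard Picard/Banach fixed-point argument, adapting the template of~\cite[Thm.~3.2]{medvedev2014nonlinear} to cover signed kernels and the extra multiplicative term of the opposing model.

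First, I would rewrite the right-hand sides in operator form. Setting $d_{W}(x):=\int_{I}W(x,y)\dy$ and $d_{|W|}(x):=\int_{I}|W(x,y)|\dy$, both functions lie in $L^{\infty}(I)$ with norm at most $\|W\|_{\infty}$. The two IVPs then take the form $\dot{u}=F_{\star}(u)$, $u(0)=g$, with $F_{\mathrm{rep}}(u)=T_{W}u-d_{W}\cdot u$ and $F_{\mathrm{opp}}(u)=T_{W}u-d_{|W|}\cdot u$. Each $F_{\star}$ is a bounded linear endomorphism of $L^{\infty}(I)$: the operator $T_{W}$ maps $L^{\infty}$ to $L^{\infty}$ with norm at most $\|W\|_{\infty}$, and multiplication by an $L^{\infty}$ function is bounded by its sup norm. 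Hence each $F_{\star}$ is globally Lipschitz on $L^{\infty}(I)$ with constant $L\le 2\|W\|_{\infty}$.

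Second, on a time interval $[0,T]$ I would consider the integral equation $u(t)=g+\int_{0}^{t}F_{\star}(u(s))\,\mathrm{d}s$ in the Banach space $C([0,T];L^{\infty}(I))$ with the sup-in-time norm, and define the Picard operator $\Phi$ by its right-hand side. The Lipschitz bound yields $\|\Phi u-\Phi v\|\le LT\|u-v\|$, so for $T<1/L$ the contraction mapping theorem produces a unique fixed point on $[0,T]$. Since the equation is autonomous and the Lipschitz constant does not depend on $u$ or on time, iterating this step on $[T,2T],[2T,3T],\ldots$ extends the solution to all of $\mathbb{R}^{+}$. Once continuity of $u$ in $t$ is in hand, continuity of $F_{\star}$ on $L^{\infty}(I)$ makes $t\mapsto F_{\star}(u(t))$ continuous, and differentiating the integral formulation upgrades $u$ to $C^{1}(\mathbb{R}^{+};L^{\infty}(I))$ satisfying the strong form.

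The only novel ingredient with respect to Medvedev's argument is the term $d_{|W|}\cdot u$ appearing in~\eqref{DynGraphonOpp}, but I do not expect this to create any genuine obstacle: since $|W|\in L^{\infty}(I^{2})$ with $\||W|\|_{\infty}=\|W\|_{\infty}$, multiplication by $d_{|W|}$ on $L^{\infty}(I)$ obeys exactly the same bound as multiplication by $d_{W}$. The one substantive design choice is to work in $L^{\infty}(I)$ rather than $L^{2}(I)$, which keeps the initial datum $g\in L^{\infty}(I)$ in its natural space and makes both $T_{W}$ and the multiplication operators bounded via the simple sup-norm estimates used above; otherwise the two cases reduce to essentially one proof.
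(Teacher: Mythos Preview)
Your proposal is correct and follows essentially the same Picard/Banach fixed-point argument on $C([0,\tau];L^{\infty}(I))$ as the paper, which also adapts \cite[Thm.~3.2]{medvedev2014nonlinear}. The only cosmetic difference is that you treat both models uniformly via the bounded linear operators $F_{\star}=T_{W}u-d_{\bullet}\cdot u$, whereas the paper rewrites the opposing dynamics using the decomposition $|W|=W+2W^{-}$ before bounding; this yields a slightly worse contraction constant ($4\|W\|_{\infty}$ versus your $2\|W\|_{\infty}$) but is otherwise the same proof.
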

\begin{proof}
The statement concerning \eqref{DynGraphonRep} is \cite[Theorem~3.2]{medvedev2014nonlinear}. The proof for \eqref{DynGraphonOpp} can be obtained with a very similar contraction argument, which we report for completeness.
First, notice that $|W(x,y)| =  W(x,y) + 2 W^-(x,y)$, where
$W^-(x,y) = \max\{-W(x,y),0)\}$ is the negative part of $W$.
With this, we can see that the integro-differential equation in \eqref{DynGraphonOpp} is equivalent to the integral equation $u = Ku$ with $K$ defined as 
\begin{multline*}
  (Ku)(x,t)  = g(x) + \int_0^t \int_I |W(x,y)| (u(y,s)-u(x,s)) \dy \, \mathrm ds\\
+ 2 \int_0^t \int_I W^-(x,y) u(y,s) \dy \, \mathrm ds    .
\end{multline*}
Consider $M_g$   the space of functions $u \in C([0,\tau];L^{\infty}(I))$ such that $u(x,0) = g(x)$, with the norm 
$\|v \|_{M_g} = \max_{t \in [0, \tau]}  \esssup_{x \in I} |v(x,t)| $.
Clearly $K: M_g \to M_g$. The goal is to show that $K$ is a contraction, for some sufficiently small $\tau$.
From the definition of $K$, 
\begin{multline*}
 (Ku - Kv)(x,t) =  \int_0^t \int_I |W(x,y)| (u(y,s)-v(y,s))  \dy \, \mathrm ds\\
 \qquad\qquad\qquad+ \int_0^t \int_I |W(x,y)| (-u(x,s)+v(x,s)) \dy \, \mathrm ds\\
+ 2 \int_0^t \int_I W^-(x,y) (u(y,s)-v(y,s)) \dy \, \mathrm ds    .
\end{multline*}
from which is is easy to see that 
\begin{multline*}
 |(Ku - Kv)(x,t) | \leq
  \| W \|_{\infty} \Big [ 3 \int_0^t \int_I | u(y,s)-v(y,s)|  \dy \, \mathrm ds \\
 +   \int_0^t |u(x,s)-v(x,s)| \, \mathrm ds \Big ] .
\end{multline*} 
This bound is the same as in the proof of \cite[Theorem~3.2]{medvedev2014nonlinear}, except for the factor 3 in the first term in the right-hand side. The conclusion then follows in a similar way: 
\begin{multline*} 3 \int_0^t \int_I | u(y,s)-v(y,s)|  \dy \, \mathrm ds 
 +   \int_0^t |u(x,s)-v(x,s)| \, \mathrm ds\\
 \leq 4\,  t \max_{s\in [0,t]} \| u(\cdot,s)-v(\cdot,s)\|_{L^\infty (I)} \, ,
 \end{multline*} 
 which does not depend on $x$,
 and hence
\begin{multline*}  \|  Ku - Kv \|_{M_g}  \leq 
4 \| W \|_{\infty} \max_{t\in[0,\tau]} t \max_{s\in [0,t]} \| u(\cdot,s)-v(\cdot,s)\|_{L^\infty (I)} \\
\leq  4 \| W \|_{\infty} \tau \max_{t\in[0,\tau]} \| u(\cdot,t)-v(\cdot,t)\|_{L^\infty (I)} 
=  4 \| W \|_{\infty} \tau \| u - v \|_{M_g} .
\end{multline*} 
Choosing $\tau = 1 / (8 \| W \|_{\infty})$, this ensures that $K$ is a contraction.
By Banach contraction theorem, this ensures existence and uniqueness of solution $u \in M_g$. The argument can then be iterated on further intervals $[k \tau, (k+1) \tau]$ to extend the result to the real line.
Furthermore, since the integrand (in time) in the definition of $K$ is continuous, $u$ is continuously differentiable. Thus, we have a classical solution. 
\end{proof}

\section{Convergence of solutions}\label{sect:convergence-solutions}
The objective of this section is to compare the solutions of \eqref{DynGraphRep} to the solutions of \eqref{DynGraphonRep} (and analogously for solutions of \eqref{DynGraphOpp} and of \eqref{DynGraphonOpp}) for large values of $n$. Since these solutions belong to different spaces, it is necessary to explain how this comparison will be made. The following lemma is instrumental to allow for a fair comparison.

\begin{lem}[Graph dynamics as graphon dynamics] \label{lem:equivalenza-discreto}
Define
$W_n(x,y)=\sum_{i,j} A^{(n)}_{ij}\mathds{1}_{I_i}(x)\mathds{1}_{I_j}(y)$,
$g_n(x) = \sum_{i=1}^n  g_i \mathds{1}_{I_i}(x)$,
and
$u_n(x,t)=\sum_{i=1}^n u^{(n)}_i(t)\mathds{1}_{I_i}(x)$,
where $I_i=(\frac{i-1}{n},\frac{i}{n}]$, $\mathds{1}_{I_i}(x)=1$ if $x\in I_i$ and 0 otherwise. 

If  $u^{(n)}(t)$ is solution of \eqref{DynGraphRep}, then $u_n(x,t)$ is  solution of 
\begin{equation}  \label{DynGraphRep-integrodiff}
\begin{cases}
    \frac{\partial u_n}{\partial t}(x,t)=\int_{I}W_n(x,y)(u_n(y,t)-u_n(x,t))\dy  \\
    u_n(x,0)=g_n(x). 
\end{cases}
\end{equation}

Analogously, if $u^{(n)}(t)$ is solution of \eqref{DynGraphOpp}, then $u_n(x,t)$ is solution of 
\begin{equation}  \label{DynGraphOpp-integrodiff}
\begin{cases}\displaystyle
    \frac{\partial u_n}{\partial t}(x,t)=\int_{I}W_n(x,y)u_n(y,t)\dy   - \int_{I}|W_n(x,y)|u_n(x,t)dy  \\
    u_n(x,0)=g_n(x).
\end{cases}
\end{equation}
\end{lem}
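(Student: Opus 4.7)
The proof is a direct verification that the piecewise-constant extensions satisfy the integro-differential equations, exploiting the fact that the partition $\{I_i\}$ used to build $W_n$, $g_n$ and $u_n$ is the same, and that each $I_i$ has Lebesgue measure $1/n$. The plan is to fix $x\in I_i$ for an arbitrary $i$, evaluate both sides of the claimed equation, and show that they coincide with the corresponding $i$-th component of \eqref{DynGraphRep} (resp.\ \eqref{DynGraphOpp}).

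First I would note that since $u_n(x,t) = u^{(n)}_i(t)$ for $x\in I_i$ and $t\mapsto u^{(n)}_i(t)$ is $C^1$, the partial derivative $\frac{\partial u_n}{\partial t}(x,t)$ exists and equals $\dot u^{(n)}_i(t)$ on $I_i$. Similarly $W_n(x,y) = A^{(n)}_{ij}$ for $(x,y)\in I_i\times I_j$, and $u_n(y,t) = u^{(n)}_j(t)$ for $y\in I_j$. Splitting the integral on the right-hand side of \eqref{DynGraphRep-integrodiff} over the partition $\{I_j\}_{j=1}^n$ gives
\begin{align*}
\int_I W_n(x,y)\bigl(u_n(y,t)-u_n(x,t)\bigr) \dy
&= \sum_{j=1}^n \int_{I_j} A^{(n)}_{ij}\bigl(u^{(n)}_j(t)-u^{(n)}_i(t)\bigr)\dy\\
&= \frac{1}{n}\sum_{j=1}^n A^{(n)}_{ij}\bigl(u^{(n)}_j(t)-u^{(n)}_i(t)\bigr),
\end{align*}
which by \eqref{DynGraphRep} equals $\dot u^{(n)}_i(t) = \frac{\partial u_n}{\partial t}(x,t)$. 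Since $i$ was arbitrary, the PDE holds almost everywhere in $x$.

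The opposing case is handled in the same way, using in addition that $|W_n(x,y)| = |A^{(n)}_{ij}|$ on $I_i\times I_j$ (this is where one must be mildly careful, since in general one cannot freely move an absolute value inside an integral, but here the function is piecewise constant so the identity is immediate). One then obtains
\begin{align*}
\int_I W_n(x,y) u_n(y,t)\dy &= \frac{1}{n}\sum_{j=1}^n A^{(n)}_{ij} u^{(n)}_j(t),\\
\int_I |W_n(x,y)| u_n(x,t)\dy &= \frac{1}{n}\sum_{j=1}^n |A^{(n)}_{ij}|\, u^{(n)}_i(t),
\end{align*}
and subtracting gives exactly the right-hand side of \eqref{DynGraphOpp}, which equals $\dot u^{(n)}_i(t) = \frac{\partial u_n}{\partial t}(x,t)$.

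Finally, the initial condition $u_n(x,0) = g_n(x)$ is immediate from the definitions of $u_n$ and $g_n$ together with $u^{(n)}_i(0)=g_i$. There is no real obstacle here: the statement is essentially a bookkeeping identity between a finite-dimensional ODE and its piecewise-constant lift, and the only point requiring a brief comment is the handling of $|W_n|$ in the opposing model, which is trivial thanks to the piecewise-constant structure.
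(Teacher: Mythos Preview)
Your proof is correct and follows essentially the same approach as the paper: fix $x\in I_i$, use the piecewise-constant structure of $W_n$ and $u_n$ to reduce the integral over $I$ to a sum of integrals over the $I_j$'s (each contributing a factor $1/n$), and identify the result with the $i$-th ODE. Your write-up is in fact slightly more complete, since you spell out the opposing case and the initial condition, whereas the paper only details the repelling model and declares the rest analogous.
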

\begin{proof}
We detail the proof for the repelling model.
From the definition of $u_n(x,t)$, we have
$$
\frac{\partial u_n(x,t)}{\partial t}=\sum_{i=1}^n 
\frac{\mathrm du^{(n)}_i(t)}{\mathrm dt} \, \mathds{1}_{I_i}(x),
$$
which means that, for any $i$, for all $x\in I_i$,
\begin{align*}
    \frac{\partial u_n(x,t)}{\partial t}=\frac{\mathrm du^{(n)}_i(t)}{\mathrm dt}&=
    \sum_{j=1}^n \frac{1}{n} \, A_{ij}(u^{(n)}_j(t)-u^{(n)}_i(t))\\
    &= \sum_{j=1}^n \int_{I_j}W_n(x,y)(u_n(y,t)-u_n(x,t))\dy \\
    &= \int_{I} W_n(x,y)(u_n(y,t)-u_n(x,t))\dy. 
\end{align*}
The proof for the opposing model is analogous. 
\end{proof}
Notice that \eqref{DynGraphRep-integrodiff} and \eqref{DynGraphOpp-integrodiff} are particular cases of \eqref{DynGraphonRep} and \eqref{DynGraphonOpp}, respectively, where the graphon $W$ and the initial condition $g$ are piece-wise constant. Hence, Theorem~\ref{thm:existence-and-unicity} on existence and uniqueness of classical solutions applies to \eqref{DynGraphRep-integrodiff} and \eqref{DynGraphOpp-integrodiff} as well.
This implies that \eqref{DynGraphRep} and \eqref{DynGraphOpp} respectively have the equivalent formulations \eqref{DynGraphRep-integrodiff} and \eqref{DynGraphOpp-integrodiff} which use graphons, so that we can now directly compare $u(x,t)$ and $u_n(x,t)$. 
We are now ready to state the main result of this section.
We will use the notation  $ \| (u-u_n)(\cdot,t)\|_2$ for the $L^2$ norm 
of $u-u_n$ in the $x$ variable only.

\begin{thm}[Convergence error estimate]\label{thm:bound}
Consider $W: I^2 \to [-1,1]$ symmetric and measurable, 
$g \in L^\infty(I)$, 
$W_n$ and $g_n$ as in Lemma~\ref{lem:equivalenza-discreto}.

If $u_n$ and $u$ are solutions of  \eqref{DynGraphRep-integrodiff} and \eqref{DynGraphonRep}, respectively,  then for all $t \in [0,T]$
\begin{equation*} 
    \| (u-u_n)(\cdot,t)\|_2^2
    \leq \left(  \|g-g_n\|_2+ C_u \vertiii{T_{W}-T_{W_n}} \right) \exp(2 T),
\end{equation*}
where $C_u = \displaystyle \esssup_{t\in[0,T], x\in I} |u(x,t)|$ (which is finite by Theorem~\ref{thm:existence-and-unicity}).

If $u_n$ and $u$ are solutions of  \eqref{DynGraphOpp-integrodiff} and \eqref{DynGraphonOpp}, respectively,  then for all $t \in [0,T]$
\begin{multline*}    \| (u-u_n)(\cdot,t)\|_2^2 \leq \\
    \left[ \|g-g_n\|_2 + 
         C_u \! \left( \vertiii{T_{W^+} -T_{W_n^+}} \!+\! \vertiii{T_{W^-} -T_{W_n^-}}  \right) \! \right] \exp(4 T),
\end{multline*}
where $W = W^+ - W^-$ denotes the decomposition of $W$ in its positive part $W^+ = \max(W,0)$ and its negative part $W^- = \max(-W, 0)$.
\end{thm}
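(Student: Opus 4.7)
The plan is to introduce the error function $v(x,t) = u(x,t) - u_n(x,t)$, derive the integro-differential equation it satisfies, and control $\|v(\cdot,t)\|_2$ by an energy / Grönwall argument. Thanks to Lemma~\ref{lem:equivalenza-discreto}, both $u$ and $u_n$ satisfy a common graphon equation, just with different kernels ($W$ vs.\ $W_n$) and initial conditions ($g$ vs.\ $g_n$), so $v$ obeys the difference of two equations of the same type. Throughout I will use that $\|u(\cdot,t)\|_\infty \leq C_u$ on $[0,T]$, finite by Theorem~\ref{thm:existence-and-unicity}.

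For the repelling model, I would first rewrite the right-hand side of \eqref{DynGraphonRep} in Laplacian form $\int_I W(x,y)(f(y)-f(x))\dy = (T_W f)(x) - f(x)\,(T_W \mathbf{1})(x)$, where $\mathbf{1}$ denotes the constant function $1$ on $I$. By adding and subtracting $(T_{W_n} u)(x) - u(x,t)\,(T_{W_n}\mathbf{1})(x)$, the evolution of $v$ splits into a \emph{perturbation} part and a \emph{feedback} part:
\[
\partial_t v = \bigl[T_{W-W_n} u - u\cdot T_{W-W_n}\mathbf{1}\bigr] + \bigl[T_{W_n} v - v\cdot T_{W_n}\mathbf{1}\bigr].
\]
Using $\tfrac{d}{dt}\|v(\cdot,t)\|_2 \leq \|\partial_t v(\cdot,t)\|_2$ (from Cauchy--Schwarz in $\tfrac{d}{dt}\|v\|_2^2 = 2\langle v,\partial_t v\rangle$), I would bound each bracket in $L^2(I)$: the perturbation bracket is $\leq 2 C_u \vertiii{T_W - T_{W_n}}$, using $\|u\|_\infty \leq C_u$, $\|\mathbf{1}\|_2 = 1$, and the definition of the operator norm; the feedback bracket is $\leq 2\|v\|_2$, because $\vertiii{T_{W_n}} \leq \|W_n\|_\infty \leq 1$ and $\|T_{W_n}\mathbf{1}\|_\infty \leq 1$. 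Grönwall's inequality on $[0,T]$ then delivers the stated bound, with exponential factor $\exp(2T)$.

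For the opposing model I would mirror the argument but now decompose $W = W^+ - W^-$ and $|W| = W^+ + W^-$ (and analogously for $W_n$), so that both $T_W - T_{W_n}$ and $T_{|W|}-T_{|W_n|}$ split into contributions depending only on $T_{W^+}-T_{W_n^+}$ and $T_{W^-}-T_{W_n^-}$. Repeating the split-and-subtract bookkeeping produces two perturbation terms and four feedback terms, each of the latter of the form $T_{W_n^\pm} v$ or $v\cdot d_{W_n^\pm}$; summing separately over positive and negative parts yields a perturbation bound $\leq 2 C_u\!\left(\vertiii{T_{W^+}-T_{W_n^+}}+\vertiii{T_{W^-}-T_{W_n^-}}\right)$ and a feedback bound $\leq 4 \|v\|_2$, which is precisely what produces the factor $4$ in $\exp(4T)$ after Grönwall.

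The main technical subtlety I expect is the treatment of the ``degree'' term $u(x,t)\int (W-W_n)(x,y)\dy$ (and its $|W|$ analogue): a naive $L^2$- or $L^1$-norm bound on $W-W_n$ would be far weaker than what the theorem asserts. The key trick is to recognize this term as $u\cdot T_{W-W_n}(\mathbf{1})$ and use $\|T_{W-W_n}\mathbf{1}\|_2 \leq \vertiii{T_W - T_{W_n}} \cdot \|\mathbf{1}\|_2 = \vertiii{T_W - T_{W_n}}$, thereby turning every kernel difference (even the one appearing as a pointwise multiplier) into an operator-norm quantity. Once that unification is in place, the rest is careful bookkeeping of constants and a scalar Grönwall inequality.
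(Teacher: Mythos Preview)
Your proposal is correct and follows essentially the same strategy as the paper: split $\partial_t v$ into a feedback part (bounded by $2\|v\|_2$, resp.\ $4\|v\|_2$) and a perturbation part (bounded by $2C_u$ times the relevant operator norms, the key point being that even the ``degree'' contribution is captured by $\vertiii{T_{W-W_n}}$), then apply Gr\"onwall. The paper differs only in presentation---it works with the energy identity $\tfrac{1}{2}\tfrac{d}{dt}\|\xi_n\|_2^2=\langle \xi_n,\partial_t\xi_n\rangle$ rather than bounding $\|\partial_t v\|_2$ directly, handles the degree term via $\|T_{\tilde W_n}(\xi_n u)\|_1$ instead of your $\|u\cdot T_{\tilde W_n}\mathbf{1}\|_2$, and uses an explicit $\phi_\eps=\sqrt{\|\xi_n\|_2^2+\eps}$ regularization to avoid the $\|v\|_2=0$ issue that your Cauchy--Schwarz step glosses over---but the estimates and constants coincide.
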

\begin{proof} 
    We will use the short-hand notations 
    $\xi_n(x,t)=u_n(x,t)-u(x,t)$
    and
    $\tilde W_n(x,y) = W_n(x,y) - W(x,y)$. 
From \eqref{DynGraphonRep} and \eqref{DynGraphRep-integrodiff}, we have 
    \begin{multline*}
        \frac{\partial \xi_n(x,t)}{\partial t}
        = \int_I W_n(x,y)(u_n(y,t)-u_n(x,t)) \dy\\
        - \int_I W(x,y) (u(y,t)-u(x,t)) \dy.
    \end{multline*}
We add  the following zero term to the right-hand side:
$-\int_I W_n(x,y) (u(y,t)-u(x,t)) \dy 
+ \int_I W_n(x,y) (u(y,t)-u(x,t)) \dy $,
to obtain
    \begin{multline*}
        \frac{\partial \xi_n(x,t)}{\partial t}
        = \int_I W_n(x,y)(\xi_n(y,t)-\xi_n(x,t)) \dy\\
        - \int_I \tilde W_n(x,y) (u(y,t)-u(x,t)) \dy.
    \end{multline*}
Then we multiply  
both sides by $\xi_n(x,t)$ and integrate over $I$ 
\begin{multline}    \label{eq:proof-convergence-int}
\int_I \frac{\partial \xi_n(x,t)}{\partial t} \,\xi_n(x,t) \dx =\\
\int_{I^2} W_n(x,y)(\xi_n(y,t)-\xi_n(x,t))\xi_n(x,t) \dx \dy\\
    +\int_{I^2} \tilde W_n(x,y)(u(y,t)-u(x,t))\xi_n(x,t) \dx\dy.
\end{multline}

For the left-hand side, notice that
\begin{equation} \label{eq:proof-convergence-int-left}
\int_I \frac{\partial \xi_n(x,t)}{\partial t} \xi_n(x,t) \dx 
= \frac{1}{2} \int_I \frac{\partial (\xi_n(x,t))^2}{\partial t}  \dx
= \frac{1}{2} \frac{\mathrm d}{\mathrm d t} \| \xi_n(\cdot,t)\|_2^2.
\end{equation}

By using $\|W\|_{\infty}\le 1$, the triangle inequality and the Cauchy-Schwartz inequality, we estimate the first term of the right-hand side of~\eqref{eq:proof-convergence-int} as follows:
\begin{multline} \label{eq:proof-convergence-int-right1} \Big| \int_{I^2} W_n(x,y)(\xi_n(y,t)-\xi_n(x,t))\xi_n(x,t) \dx \dy \Big|    
\leq \\
%
\leq \int_{I^2}|(\xi_n(y,t)-\xi_n(x,t))\xi_n(x,t)|\dx\dy\leq 2 \, \|\xi_n(\cdot, t)||^2_2.
\end{multline}

For the second term of the right-hand side of \eqref{eq:proof-convergence-int}, 
\begin{multline}  \label{eq:proof-convergence-right2-preliminary}
  \Big| \int_{I^2} \tilde W_n(x,y)(u(y,t)-u(x,t))\xi_n(x,t) \dx\dy \Big| \\
\leq 
  \Big| \int_{I^2} \tilde W_n(x,y) u(y,t) \xi_n(x,t) \dx\dy \Big| \\
  + \Big| \int_{I^2} \tilde W_n(x,y) u(x,t) \xi_n(x,t) \dx\dy \Big| .
\end{multline}
Recalling the definition of the integral operator $T_W$ associated with a graphon $W$, we have
\begin{multline*}
     \Big| \int_{I^2} \tilde W_n(x,y) u(y,t) \xi_n(x,t) \dx\dy \Big| \\=
\Big| \int_{I} \left( T_{\tilde W_n} \xi_n(\cdot,t) \right)(y) u(y,t)  \dy \Big| 
\leq \| (T_{\tilde W_n} \xi_n)(\cdot,t) \, u(\cdot,t)  \|_1 .
\end{multline*}
Then, by H\"older inequality,
\begin{multline*}
\| (T_{\tilde W_n} \xi_n)(\cdot,t) \, u(\cdot,t)  \|_1 
\leq \| (T_{\tilde W_n} \xi_n)(\cdot,t) \|_2  \, \| u(\cdot,t)  \|_2  \\
\leq \vertiii{T_{\tilde W_n}} \, \|\xi_n(\cdot,t) \|_2 \, \| u(\cdot, t) \|_2
\leq \vertiii{T_{\tilde W_n}} \, \|\xi_n(\cdot,t) \|_2 \, \| u(\cdot, t) \|_\infty.
\end{multline*}
Similarly, for the second term in \eqref{eq:proof-convergence-right2-preliminary},
\begin{multline*}
     \Big| \int_{I^2} \tilde W_n(x,y) u(x,t) \xi_n(x,t) \dx\dy \Big| \\=
\Big| \int_{I} \left( T_{\tilde W_n} (\xi_n(\cdot,t) u(\cdot,t)) \right)(y)  \dy \Big| 
\leq \|  T_{\tilde W_n} (\xi_n(\cdot,t) u(\cdot,t))  \|_1 ,
\end{multline*}
and
\begin{multline*}
\| T_{\tilde W_n} (\xi_n(\cdot,t) u(\cdot,t)) \|_1 
\leq \| T_{\tilde W_n} (\xi_n(\cdot,t) u(\cdot,t)) \|_2  \\
\leq \vertiii{T_{\tilde W_n}} \, \|\xi_n(\cdot,t) u(\cdot,t)\|_2 
\leq \vertiii{T_{\tilde W_n}} \, \|\xi_n(\cdot,t) \|_2 \, \| u(\cdot, t) \|_\infty.
\end{multline*}
We have obtained the following bound for the second term in \eqref{eq:proof-convergence-int}:
\begin{multline}
    \label{eq:proof-convergence-right2}
  \Big| \int_{I^2} \tilde W_n(x,y)(u(y,t)-u(x,t))\xi_n(x,t) \dx\dy \Big| \\
  \leq
2 \vertiii{T_{\tilde W_n}} \, \|\xi_n(\cdot,t) \|_2 \, \| u(\cdot, t) \|_\infty
  \leq 2 \, C_u \,  \vertiii{T_{\tilde W_n}} \, \|\xi_n(\cdot,t) \|_2 ,
\end{multline}
with $C_u = \esssup_{t \in [0,T], x\in I} |u(x,t) | $,
which is finite, since $u$ is continuous in time and essentially bounded in space, by Theorem~\ref{thm:existence-and-unicity}.

Combining \eqref{eq:proof-convergence-int}, 
\eqref{eq:proof-convergence-int-left}, \eqref{eq:proof-convergence-int-right1} and \eqref{eq:proof-convergence-right2}, we have
\begin{equation}  \label{eq:proof-convergence-final}
   \frac{1}{2} \frac{\mathrm d}{\mathrm d t} \| \xi_n(\cdot,t)\|_2^2
   \leq 2 \,  \| \xi_n(\cdot,t)\|_2^2
   + 2 \, C_u \, \vertiii{T_{\tilde W_n}} \, \|\xi_n(\cdot,t) \|_2 .
\end{equation}
This is  similar to the estimate obtained in \cite{medvedev2014nonlinear}, but with different constants and $\vertiii{T_{\tilde W_n}}$ instead of $\|\tilde W_n\|_2$.
Then, the proof can then concluded exactly as in \cite{medvedev2014nonlinear}, as follows.
To `divide by $\| \xi_n(\cdot, t) \|_2$' avoiding the risk to divide by zero, one
defines an auxiliary function 
$\phi_{\eps}(t)=\sqrt{\| \xi_n(\cdot, t) \|_2^2+\eps}$, for any $\eps>0$.
From \eqref{eq:proof-convergence-final},
\[
     \frac{1}{2} \frac{\mathrm d}{\mathrm dt}\phi_{\eps}(t)^2 \leq 
     2 \phi_{\eps}(t)^2 + 2 C_u \, \vertiii{T_{\tilde W_n}} \phi_{\eps}(t)
\]
and, by chain rule and dividing by $\phi_{\eps}(t)$, 
\[
\frac{\mathrm d}{\mathrm dt}\phi_{\eps}(t)
\leq  2 \phi_{\eps}(t) + 2 C_u\, \vertiii{T_{\tilde W_n}} .
\]
By Grönwall's lemma, for all $t \in [0,T]$
\[  \phi_{\eps}(t)
\leq 
\left(\phi_{\eps}(0)+ C_u \vertiii{T_{\tilde W_n}} \right)\exp(2 \, T).
\]
Since $\epsilon$ is arbitrary, this gives the conclusion.



Now we consider the opposing case. 
From \eqref{DynGraphonOpp} and \eqref{DynGraphOpp-integrodiff}, 
recalling that $|W_n(x,y)| = W_n(x,y) + 2 W_n^-(x,y)$,
we have 
    \begin{multline*}
        \frac{\partial \xi_n(x,t)}{\partial t}
        = \int_I |W_n(x,y)| \, (u_n(y,t)-u_n(x,t)) \dy\\
        - \int_I |W(x,y)| \, (u(y,t)-u(x,t)) \dy \\
        - 2 \int_I W_n^-(x,y) u_n(y,t) \dy
        + 2 \int_I W^-(x,y) u(y,t) \dy.
    \end{multline*}
We add  the following zero term to the right-hand side:
$-\int_I W_n(x,y) (u(y,t)-u(x,t)) \dy 
+ \int_I W_n(x,y) (u(y,t)-u(x,t)) \dy 
- 2 \int_I W_n^-(x,y) u(y,t) \dy 
+ 2 \int_I W_n^-(x,y) u(y,t) \dy$.
Then we multiply both sides by $\xi_n(x,t)$ and integrate in $I$ in the $x$ variable. 
We obtain:
\begin{align}
\label{eq:proof-convergence-int_Opp}
\nonumber  \int_I &\frac{\partial \xi_n(x,t)}{\partial t} \,\xi_n(x,t) \dx\\
\nonumber=&\int_{I^2} |W_n(x,y)| \, (\xi_n(y,t)-\xi_n(x,t))\xi_n(x,t) \dx \dy\\
\nonumber    &+\!\!\int_{I^2} ( |W_n(x,y)| - |W(x,y)| )(u(y,t)-u(x,t))\xi_n(x,t) \dx\dy\\
\nonumber    &+ 2 \int_{I^2} W_n^-(x,y) \xi_n(y,t) \xi_n(x,t) \dx\dy\\
    &+ 2 \int_{I^2} ( W_n^-(x,y) - W^-(x,y) )  u(y,t)  \xi_n(x,t) \dx\dy.
\end{align}

Compare \eqref{eq:proof-convergence-int_Opp} with  \eqref{eq:proof-convergence-int}.
The left-hand side is the same 
and the first two terms of the right-hand side are almost the same, with the only difference that $W_n(x,y)$ and $W(x,y)$ 
are now replaced by $|W_n(x,y)|$ and $|W(x,y)|$. 
Hence, \eqref{eq:proof-convergence-int-left}, \eqref{eq:proof-convergence-int-right1}
and \eqref{eq:proof-convergence-right2} still apply, except that 
$\vertiii{T_{\tilde W_n}}$ is replaced by 
$\vertiii{T_{|W_n| - |W|}}$.
In the following, we take care of the remaining terms, which were not present in \eqref{eq:proof-convergence-int}.

Since $|W^-(x,y)| \leq 1$,
\begin{multline*}
   \Big| 2 \int_{I^2} W_n^-(x,y) \xi_n(y,t) \xi_n(x,t) \dx\dy \Big| \\
    \leq 2 \int_{I^2} | \xi_n(y,t) \xi_n(x,t) | \dx\dy
    = 2 \| \xi_n(\cdot, t) \|_2^2.
\end{multline*}
For the last term,
\begin{align*}
   &\Big| 2 \int_{I^2}  ( W_n^-(x,y) - W^-(x,y) )  u(y,t)  \xi_n(x,t) \dx\dy \Big| \\
   &\leq 2 \int_I |(T_{W_n^- - W^-} \xi_n)(\cdot,t))(y)| \, |u(y,t)| \dy \\
   &= 2 \|(T_{W_n^- - W^-} \xi_n)(\cdot,t) \, u(\cdot,t)\|_1\\
   &\leq 2 \|(T_{W_n^- - W^-} \xi_n)(\cdot,t) \|_2  \, \|  u(\cdot,t)\|_{\infty} \\
   &\leq 2 \vertiii{T_{W_n^- - W^-}} \, \|\xi_n(\cdot,t) \|_2  \, C_u.
\end{align*}
Collecting the upper bounds for all four terms in the right-hand side of \eqref{eq:proof-convergence-int_Opp}, we get
\begin{multline*}
\frac{1}{2} \frac{\mathrm d}{\mathrm d t} \| \xi_n(\cdot,t)\|_2^2 
   \,\leq\,  4 \,  \| \xi_n(\cdot,t)\|_2^2 \\
   + 2 \, C_u \, \left( \vertiii{T_{|W_n| - |W|}} + \vertiii{T_{W_n^- - W^-}} \right)\, \|\xi_n(\cdot,t) \|_2.
\end{multline*}
Using $|W| = W^+  +  W^-$, 
$T_{W_1+W_2} = T_{W_1} + T_{W_2}$ 
and triangle inequality, we have
$\vertiii{T_{|W_n|-|W|}} \leq \vertiii{T_{W_n^+} -T_{W^+}} + \vertiii{T_{W_n^-} -T_{W^-}} $. Hence,
\begin{multline*}
\frac{1}{2} \frac{\mathrm d}{\mathrm d t} \| \xi_n(\cdot,t)\|_2^2 
   \,\leq\,  4 \,  \| \xi_n(\cdot,t)\|_2^2 \\
   + 4 \, C_u \, \left( \vertiii{T_{W_n^+} -T_{W^+}} + \vertiii{T_{W_n^-} -T_{W^-}} \right)\, \|\xi_n(\cdot,t) \|_2,
\end{multline*}
which is the analogous of \eqref{eq:proof-convergence-final}. 
The conclusion then follows by the same technique involving $\phi_\eps$ and Grönwall's lemma. 
\end{proof}

Theorem~\ref{thm:bound} implies that, so long as 
the initial conditions converge and the finite graphs converge to the graphon, the dynamics on the finite graphs converge to the dynamics on the graphon, in the sense that solutions converge on bounded intervals. 

\begin{rem}[Choice of the convergence norm]
    The proof of convergence as $n$ goes to infinity is based on adapting arguments from \cite{medvedev2014nonlinear}.
Other than some extra terms involved in the opposing Laplacian, the main differences with \cite{medvedev2014nonlinear} are that we restrict ourselves to linear dynamics and that we obtain a stronger upper bound, which involves $\vertiii{T_{W-W_n}}$, whereas \cite{medvedev2014nonlinear} considers $\| W-W_n \|_2$. Having the operator norm is essential to apply the result to sampled graphs, for which $\vertiii{T_{W-W_n}}$ vanishes for large $n$ under mild assumptions, while this is not the case for $\| W-W_n \|_2$.
\end{rem}

\section{Convergence of solutions for sampled graphs}
\label{sect:convergence-sampled}

In this section, we apply Theorem~\ref{thm:bound} to prove that solutions to opinion dynamics on large graphs that are sampled from graphons are well approximated by solutions to opinion dynamics on graphons, over finite time horizons. 

In order to have convergence of the solutions of the two different dynamics, we need to have compatible initial conditions. 
Given the initial condition $g(x)$ for the dynamics on the graphon, we choose to define the initial condition of the dynamics on the sampled graph $g_n(x)$ as a piece-wise constant function by assigning the value $g(X_i)$ to  the entire interval $I_i$. We now show that with this definition, the term $\|g_n-g\|_2$ present in Theorem \ref{thm:bound} goes to 0 as $n$ increases.
\begin{prop}[Convergence of sampled initial conditions]\label{prop:conv_in_cond}
    Given $g: I \to \R$ bounded and almost everywhere continuous,
    and given $X_1, \dots, X_n \in I$, define the piece-wise constant function
    $g_n(x)=g(X_i)$ for all $x\in I_i$.
If $X_1, \dots, X_n$ are deterministic latent variables, 
 then $\lim_{n \to \infty} \|g_n-g\|_2 =0$.
If $X_1, \dots, X_n$ are stochastic latent variables, 
then
$\lim_{n \to \infty} \|g_n-g\|_2 =0$ almost surely.
 \end{prop}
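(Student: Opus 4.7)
The plan is to reduce both cases to a single dominated-convergence argument by writing $g_n = g \circ \phi_n$, where $\phi_n: I \to I$ is the piece-wise constant function defined by $\phi_n(x) := X_i$ for $x \in I_i$. Since $g$ is bounded, the integrand $|g_n(x) - g(x)|^2$ is uniformly bounded by $4\|g\|_\infty^2$, so by the dominated convergence theorem it will suffice to establish pointwise convergence $g_n(x) \to g(x)$ at almost every $x \in I$. Because $g$ is almost everywhere continuous, the set of its continuity points has full Lebesgue measure; hence the task further reduces to showing, in each setup, that $\phi_n(x) \to x$ at (almost) every $x$, and in fact uniformly.

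For case (\ref{dlv}), $\phi_n(x) = i/n$ whenever $x \in I_i$, so trivially $\sup_{x \in I} |\phi_n(x) - x| \le 1/n$ and the conclusion follows from the reduction above. For case (\ref{slv}), the key input is the Glivenko-Cantelli theorem applied to the uniform sample $U_1, \dots, U_n$: the empirical CDF $F_n$ satisfies $\sup_{x \in I} |F_n(x) - x| \to 0$ almost surely. Since the $U_i$'s are almost surely distinct, $F_n(U_{(i)}) = i/n$, which gives $\max_i |U_{(i)} - i/n| \le \sup_x |F_n(x) - x|$, and therefore $\sup_{x \in I} |\phi_n(x) - x| \le \max_i |U_{(i)} - i/n| + 1/n \to 0$ almost surely. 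On this full-probability event, the pathwise DCT argument above yields $\|g_n - g\|_2 \to 0$ almost surely.

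The proof is essentially routine once the reduction $g_n = g \circ \phi_n$ and the observation that it suffices to have $\phi_n \to \mathrm{id}$ uniformly are in place. The only nontrivial ingredient is the almost-sure concentration of uniform order statistics, for which Glivenko-Cantelli is the cleanest route; I would not expect any serious obstacle beyond being careful that the null set of discontinuities of $g$ and the null set where the order statistics misbehave are both negligible, which is immediate from the structure of the argument.
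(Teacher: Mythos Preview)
Your argument is correct and follows the same overall scheme as the paper's proof: reduce to pointwise almost-everywhere convergence via dominated convergence, then use almost-everywhere continuity of $g$ together with the uniform convergence $\phi_n(x)\to x$. The deterministic case is handled identically. The only substantive difference is in the stochastic case: the paper establishes $\max_i |U_{(i)} - i/n| \to 0$ a.s.\ by invoking an explicit concentration bound for uniform order statistics (from~\cite{avella2018centrality}) and then Borel--Cantelli, whereas you obtain the same conclusion from the Glivenko--Cantelli theorem via the clean identity $F_n(U_{(i)}) = i/n$. Your route is arguably more self-contained and classical, since it avoids citing an external quantitative estimate; the paper's route, on the other hand, yields an explicit rate $O(\sqrt{(\log n)/n})$ for $\max_i|U_{(i)}-i/n|$, which is not needed here but could be useful if one wanted a quantitative version of the proposition.
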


 \begin{proof}
 Preliminarily, observe that since $g$ is almost everywhere continuous, for almost all $x$ we have
 $\lim_{h \to 0} |g(x+h)-g(x)| = 0$.
In the case of deterministic latent variables, for all $i$, for all $x \in I_i$, $g_n(x) = g(X_i)$, with $|X_i - x| \le \frac{1}{n}$, which goes to zero when $n \to \infty$. 
Hence, for almost all $x$ we have  $\lim_{n \to \infty} |g_n(x)-g(x)| = 0$.
From this pointwise (almost everywhere) convergence we can obtain convergence in $L^2$ norm, using the Lebesgue dominated convergence theorem. 
Indeed, since $g $ is bounded, there exists $m_g > 0$ s.t.\ $|g(x)| \le m_g$ for all $x$ and hence also $|g_n(x)| \le m_g$ for all $x$, so that $|g_n(x) - g(x)| \le 2 m_g$.
We have that  $(g_n - g)^2$ converges pointwise a.e.\ to $0$ and is bounded by a constant, so that by  Lebesgue dominated convergence theorem
$ \int_I (g_n(x) - g(x))^2 \dx $ converges to zero.

The proof in the case of stochastic latent variables is similar, the only difference being in the step that shows that for $x \in I_i$, $|X_i - x|$ goes to zero when $n \to \infty$, which now can be obtained a.s., as follows.
We use \cite[Prop.~3]{avella2018centrality} (taking $\delta = 1/n^2)$: for all sufficiently large $n$, with probability at least $1 - \frac{1}{n^2}$,
\[ \left | X_i - \frac{i}{n+1}\right |  \le \sqrt{\frac{24 \log n}{n+1}} \,, \; \forall i = 1, \dots , n.\]
By Borel-Cantelli Lemma, this further implies that almost surely the same bound holds 
eventually.
Now simply notice that  for all $i$, for all $x \in I_i$
\[ | X_i - x | \le 
| X_i - \tfrac{i}{n+1}  | + |  \tfrac{i}{n+1} - \tfrac{i}{n} | 
    + |\tfrac{i}{n}  - x| \]
with $|\tfrac{i}{n}  - x| \le \frac{1}{n}$. Hence, almost surely, for $n \to \infty$
\[ | X_i - x | = O\left(  \sqrt{\frac{24 \log n}{n+1}} + \frac{2}{n} \right)   \,,\]
for all $i$ and all $x \in I_i$.
\end{proof}


To show that the second terms of the right-hand sides of the bounds in Theorem \ref{thm:bound} go to 0, we 
have
the following proposition, where for simplicity we assume that the graphon is piece-wise Lipschitz~\cite[Assumption~1]{avella2018centrality}. 
\begin{prop}[Convergence of sampled graphs]
\label{convergence_sampled_to_graphon}
    If $W:[0,1]\to[-1,1]$ is a piece-wise Lipschitz graphon and $W_n$ is a graph sampled from $W$ with either deterministic or stochastic latent variables $X_i$, then almost surely $\vertiii{T_{W_n}-T_W}\to0$,
    $\vertiii{T_{W^+_n}-T_{W^+}}\to0$ and $\vertiii{T_{W^-_n}-T_{W^-}}\to~0$ for $n\to\infty$.
\end{prop}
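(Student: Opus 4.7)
The plan is to decompose $W_n - W$ into a deterministic discretization error and a zero-mean random fluctuation, and to control both in operator norm. Specifically, I would introduce the auxiliary piece-wise constant kernel $\bar W_n(x,y) := W(X_i,X_j)$ for $(x,y)\in I_i\times I_j$, namely $W$ evaluated at the sampled points, and write $W_n - W = (W_n - \bar W_n) + (\bar W_n - W)$. By the triangle inequality for $\vertiii{\cdot}$, it suffices to show that both $\vertiii{T_{\bar W_n}-T_W}$ and $\vertiii{T_{W_n}-T_{\bar W_n}}$ vanish almost surely.

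For the deterministic piece, the same arguments used in Proposition~\ref{prop:conv_in_cond} yield $|X_i - x|\to 0$ uniformly for $x\in I_i$ (deterministically in case (1), almost surely via Borel--Cantelli in case (2)). Combined with the almost-everywhere continuity of $W$ (implied by the piece-wise Lipschitz hypothesis), this gives $\bar W_n(x,y) \to W(x,y)$ pointwise almost everywhere on $I^2$. Since $|\bar W_n - W|\le 2$, Lebesgue dominated convergence yields $\|\bar W_n - W\|_2 \to 0$, and using $\vertiii{T_K}\le\|K\|_2$ (the operator norm is dominated by the Hilbert--Schmidt norm), we conclude $\vertiii{T_{\bar W_n}-T_W}\to 0$.

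For the random piece, a short direct computation on the invariant subspace of piece-wise constant $L^2$ functions shows that any kernel of the form $K = \sum_{ij} B_{ij}\,\mathds{1}_{I_i}(x)\mathds{1}_{I_j}(y)$ satisfies $\vertiii{T_K} = \|B\|_{\mathrm{op}}/n$. Applied to $B_{ij} = A^{(n)}_{ij} - W(X_i,X_j)$, the claim reduces to $\|B\|_{\mathrm{op}} = o(n)$ almost surely. Conditional on the latent variables, $B$ is a symmetric matrix whose above-diagonal entries are independent, mean-zero, and uniformly bounded by $2$. Non-asymptotic spectral bounds for Wigner-type random matrices (e.g., F\"uredi--Koml\'os, or the sharper Bandeira--van Handel inequalities) give $\|B\|_{\mathrm{op}} = O(\sqrt n)$ with tail probabilities summable in $n$, and hence $\|B\|_{\mathrm{op}} = O(\sqrt n)$ almost surely by Borel--Cantelli. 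Therefore $\vertiii{T_{W_n}-T_{\bar W_n}} = O(n^{-1/2}) \to 0$ almost surely.

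The same scheme handles $W^\pm$. Piece-wise Lipschitz continuity transfers from $W$ to $W^\pm$ because $t\mapsto\max(\pm t,0)$ is $1$-Lipschitz, and by Definition~\ref{sampledgraph_negativeweights} the positive and negative parts of $A^{(n)}_{ij}$ satisfy $\mathbb{E}[A^+_{ij}\mid X]=W^+(X_i,X_j)$ and $\mathbb{E}[A^-_{ij}\mid X]=W^-(X_i,X_j)$: indeed $A^{(n)}_{ij}\in\{0,1\}$ when $W(X_i,X_j)\ge 0$ and $A^{(n)}_{ij}\in\{-1,0\}$ otherwise. Plugging these in with $\bar W^\pm_n(x,y) := W^\pm(X_i,X_j)$ reproduces the same two-step argument. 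The hardest step is the random-matrix bound: the naive Hilbert--Schmidt estimate $\|B\|_{\mathrm{HS}} = O(n)$ only gives $\vertiii{T_{W_n}-T_{\bar W_n}} = O(1)$, so a genuine spectral concentration inequality is essential---which is precisely the reason the operator norm is preferable to the $L^2$ norm in Theorem~\ref{thm:bound}.
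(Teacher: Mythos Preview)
Your argument is correct, but the paper takes a much shorter route. Rather than decomposing $W_n-W$ into a deterministic discretization piece $\bar W_n-W$ and a centered random piece $W_n-\bar W_n$, the paper reduces the signed case to the already-known $[0,1]$-valued case: it observes that the positive (resp.\ negative) part of the sampled signed graph $W_n$ coincides in law with a graph sampled from $W^+$ (resp.\ $W^-$) in the usual sense of Definition~\ref{sampledgraph_negativeweights}, and then simply invokes \cite[Thm~1]{avella2018centrality} for each of $W^+$ and $W^-$; the convergence of $\vertiii{T_{W_n}-T_W}$ then follows by the triangle inequality $\vertiii{T_{W_n}-T_W}\le\vertiii{T_{W_n^+}-T_{W^+}}+\vertiii{T_{W_n^-}-T_{W^-}}$. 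Your approach is more self-contained---it essentially reproves the cited theorem and, as a by-product, makes the rate $O(n^{-1/2})$ explicit for the random fluctuation via the identity $\vertiii{T_K}=\|B\|_{\mathrm{op}}/n$ and Wigner-type spectral concentration---whereas the paper's approach hides all of this inside the reference and isolates as its only new ingredient the compatibility of the signed sampling construction with the $W=W^+-W^-$ decomposition.
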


\begin{proof}
    The result is well-known \cite[Thm~1]{avella2018centrality} when $W$ has values in $[0,1]$. To extend it to our case, 
    we use the decomposition in positive and negative parts, $W=W^+-W^-= \max(W,0)-\max(-W, 0)$.
    Notice that the construction of the sampled graph $W_n$ in Definition~\ref{sampledgraph_negativeweights} is equivalent of building a sampled graph $W_n^+$ from the graphon $W^+$, a sampled graph $W_n^-$ from the graphon $W^-$, and defining $W_n=W_n^+-W_n^-$.
In fact, $W_n^+$ can be seen either as a graph sampled from the positive part of $W$ or as the positive part of a signed graph sampled from $W$ (and the same goes for $W_n^-$).
Notice that $W^+$ and $W^-$ are standard graphons with values in [0,1], and hence, by \cite[Thm~1]{avella2018centrality} and $\vertiii{T_{W_n^+}-T_{W^+}}\to 0$, $\vertiii{T_{W_n^-}-T_{W^-}}\to 0$ almost surely for $n \to \infty$.
The almost sure convergence of $ \vertiii{T_{W_n}-T_{W}} $ to zero then follows by triangular inequality:
\begin{equation*}\label{eq:conv-sampled-graphs}
\begin{split}
    \vertiii{T_{W_n}-T_{W}}   
    &=\vertiii{T_{W_n^+}-T_{W_n^-}-T_{W^+}+T_{W^-}}\\
    &\leq\vertiii{T_{W_n^+}-T_{W^+}}+\vertiii{T_{W_n^-}-T_{W^-}}.
\end{split}
\end{equation*}

\end{proof}

Propositions~\ref{prop:conv_in_cond} and~\ref{convergence_sampled_to_graphon} show that, under the given assumptions, all the norms in the right-hand sides of the bounds in Theorem \ref{thm:bound} go to 0, which proves the following theorem.
\begin{thm}[Convergence of solutions on sampled graphs]\label{thm:sampled-convergence}
Let $W:[0,1] \to [-1,1]$ be piece-wise Lipschitz and let $u$ be a solution to \eqref{DynGraphonRep} (respectively, to \eqref{DynGraphonOpp}). For each $n\in\mathbb{N}$, 
consider deterministic or stochastic latent variables as in Definition~\ref{latent_var} and
let $W_n$ be a graph of size $n$, sampled from $W$ as per Definition~\ref{sampledgraph_negativeweights}, and $u_n$ be the solution to \eqref{DynGraphRep} (respectively, to \eqref{DynGraphOpp}). Assume that the initial conditions satisfy the assumptions of  
Proposition~\ref{prop:conv_in_cond}.
Then,
for any fixed $T >0$, for $n \to \infty$,
\[ \max_{t \in [0,T]}  \|u_n(\cdot, t)-u(\cdot,t)\|_2\to 0 \,  \text{ a.s.} \]
\end{thm}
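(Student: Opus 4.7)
The plan is to combine the three previously established results in a direct manner: the quantitative bound of Theorem~\ref{thm:bound}, the initial-condition convergence of Proposition~\ref{prop:conv_in_cond}, and the sampled-graphon operator-norm convergence of Proposition~\ref{convergence_sampled_to_graphon}. First I would apply Theorem~\ref{thm:bound} on the finite horizon $[0,T]$ to the pair $(u,u_n)$, obtaining for the repelling case a pointwise-in-$t$ estimate of the form
\[
\| (u-u_n)(\cdot,t)\|_2^2 \leq \bigl(\|g-g_n\|_2 + C_u \vertiii{T_W - T_{W_n}}\bigr) \exp(2T),
\]
and for the opposing case the analogous bound with $\vertiii{T_{W^+} - T_{W_n^+}} + \vertiii{T_{W^-} - T_{W_n^-}}$ in place of $\vertiii{T_W - T_{W_n}}$ and $\exp(4T)$ in place of $\exp(2T)$. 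The key observation is that the right-hand side is independent of $t$, so one may take the maximum over $t\in[0,T]$ on the left without any change on the right.

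Next I would verify that each term on the right-hand side vanishes almost surely as $n\to\infty$. For $\|g-g_n\|_2$, since by hypothesis $g$ satisfies the assumptions of Proposition~\ref{prop:conv_in_cond} (bounded, almost everywhere continuous), that proposition gives the convergence, almost surely in the stochastic latent variable case and deterministically in the other. For the operator norm terms, Proposition~\ref{convergence_sampled_to_graphon} applied to the piece-wise Lipschitz graphon $W$ yields that $\vertiii{T_{W_n^+} - T_{W^+}}$, $\vertiii{T_{W_n^-} - T_{W^-}}$ and hence, by triangle inequality, $\vertiii{T_{W_n} - T_{W}}$, all tend to zero almost surely. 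Taking the intersection of the two almost-sure events (a countable intersection of full-measure sets has full measure), we obtain a single almost-sure event on which all relevant quantities vanish.

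Finally, I would plug these convergences into the bound. Since $C_u$ depends only on $W$, $g$ and $T$ (it is the essential supremum of the graphon solution $u$, which is finite by Theorem~\ref{thm:existence-and-unicity} and independent of $n$), and the exponential factor $\exp(2T)$ or $\exp(4T)$ is a deterministic constant, the right-hand side goes to zero almost surely. Taking square roots and the maximum over $t\in[0,T]$ gives the claim.

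I do not foresee any genuine obstacle here: the theorem is essentially a corollary assembling the ingredients that have been carefully prepared in the previous sections. The only small care required is to ensure that $C_u$ is treated as a fixed (finite) constant depending only on the graphon dynamics, and that the almost-sure statements from Propositions~\ref{prop:conv_in_cond} and~\ref{convergence_sampled_to_graphon} are combined on a common event of full probability.
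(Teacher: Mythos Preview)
Your proposal is correct and follows essentially the same approach as the paper, which simply states that Propositions~\ref{prop:conv_in_cond} and~\ref{convergence_sampled_to_graphon} make all the norms on the right-hand side of Theorem~\ref{thm:bound} vanish. You are in fact more explicit than the paper about the uniformity in $t$, the independence of $C_u$ from $n$, and the intersection of almost-sure events.
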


\section{Simulations}\label{sect:simulations}

\begin{figure}  
    \centering
    \includegraphics[width=0.8\linewidth]{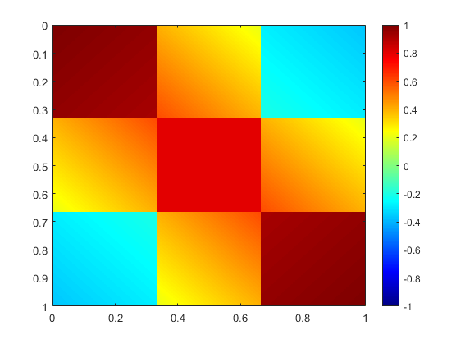}
    \caption{The graphon used in the simulations.} 
    \label{W_plot}
\end{figure}

\begin{figure*}
    \includegraphics[width=0.65\columnwidth]{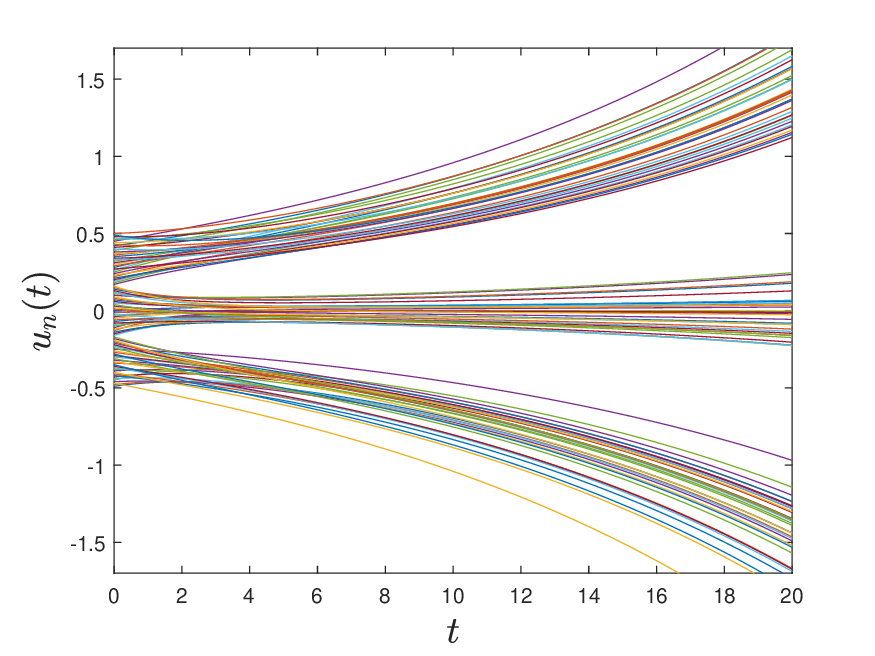}
    \includegraphics[width=0.65\columnwidth]{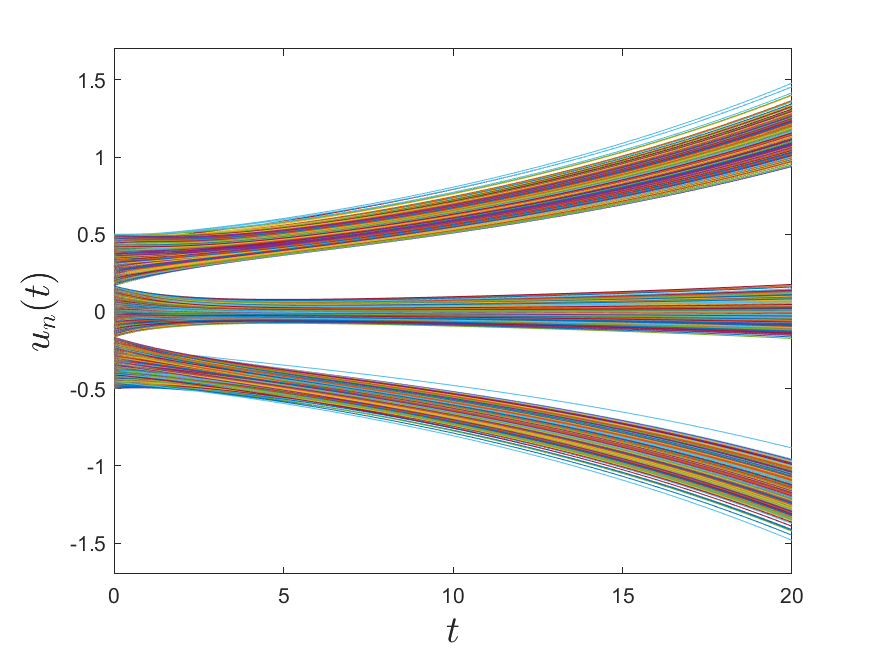}
    \includegraphics[width=0.65\columnwidth]{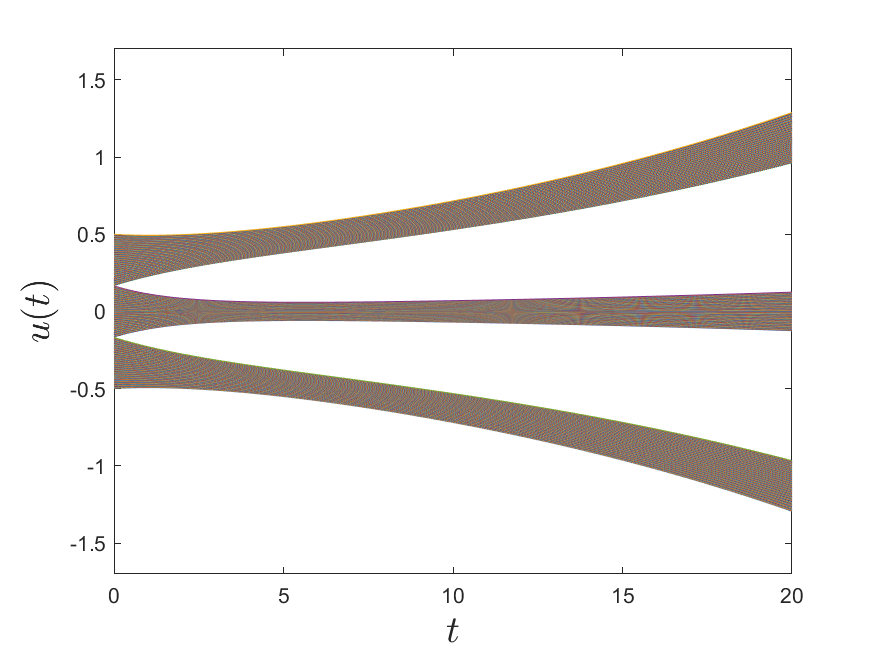}
    \caption{Comparison between solutions on graphs with $n=100$ (left) and $n=1000$ (center) and on the graphon (right) for the repelling dynamics, with initial condition $g(x)=x-\frac12$.} 
    \label{fig:diverging-solutions}
\end{figure*}

\begin{figure*}
    \includegraphics[width=0.65\columnwidth]{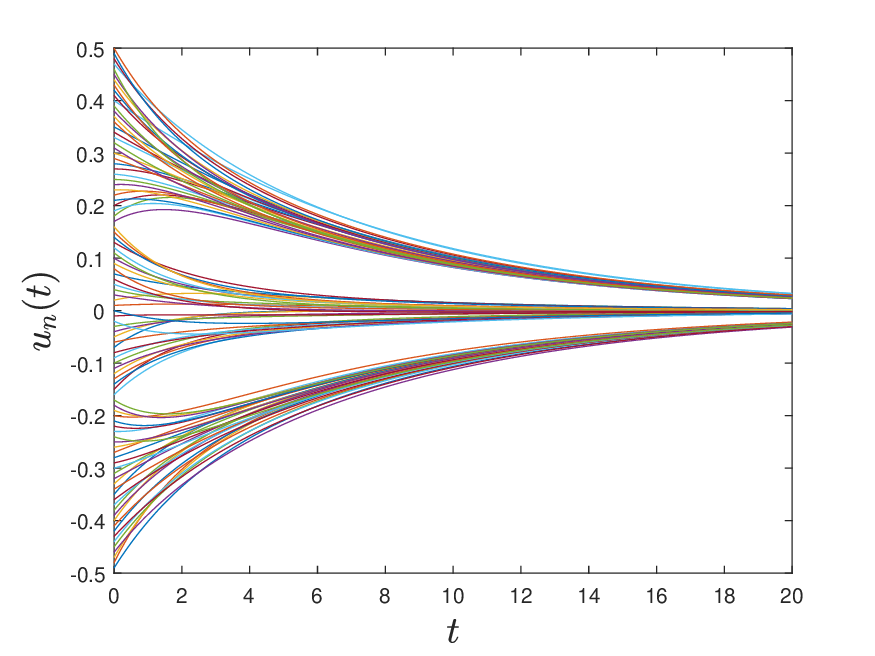}
    \includegraphics[width=0.65\columnwidth]{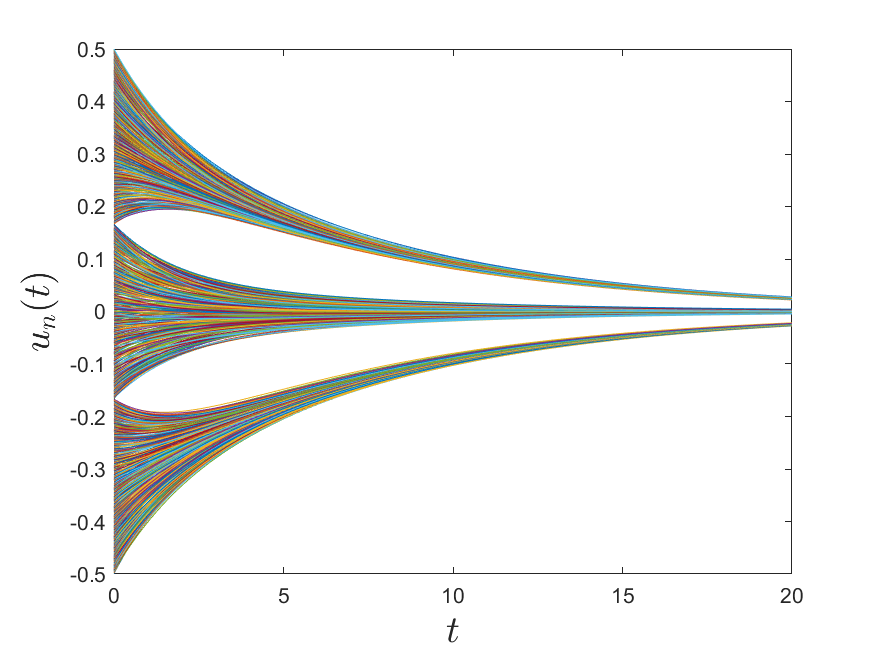}
    \includegraphics[width=0.65\columnwidth]{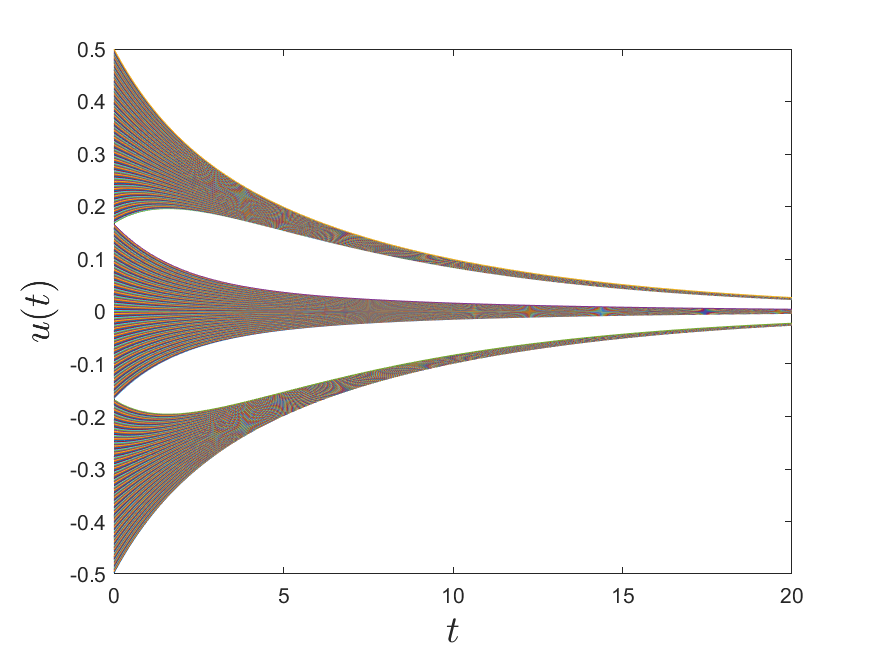}
    \caption{Comparison between solutions on graphs with $n=100$ (left) and $n=1000$ (center) and on the graphon (right) for the opposing dynamics, with initial condition $g(x)=x-\frac12$.} 
    \label{fig:opposing}
\end{figure*}
\begin{figure*}
    \centering
    \includegraphics[width=0.45\linewidth]{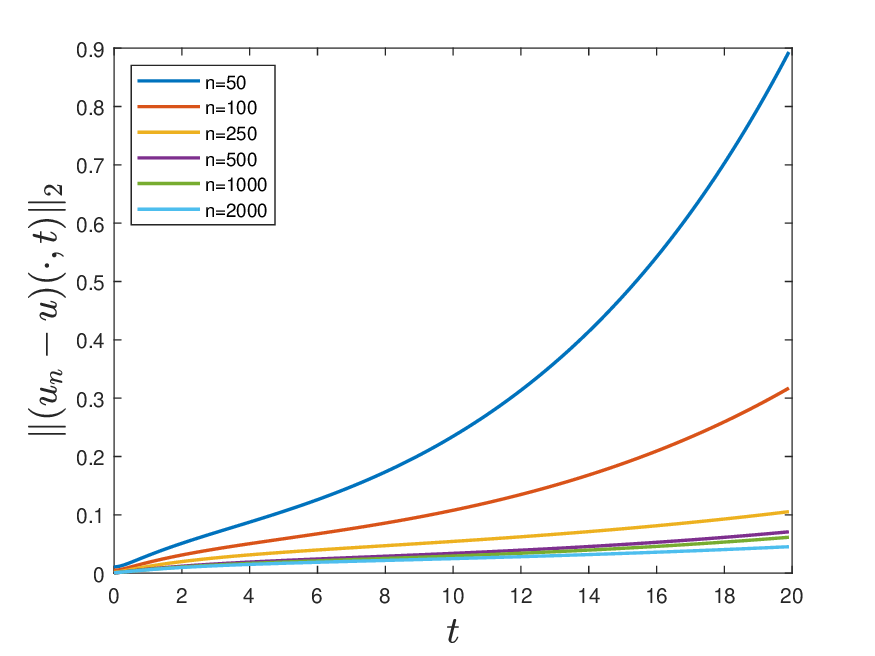}
    \includegraphics[width=0.45\linewidth]{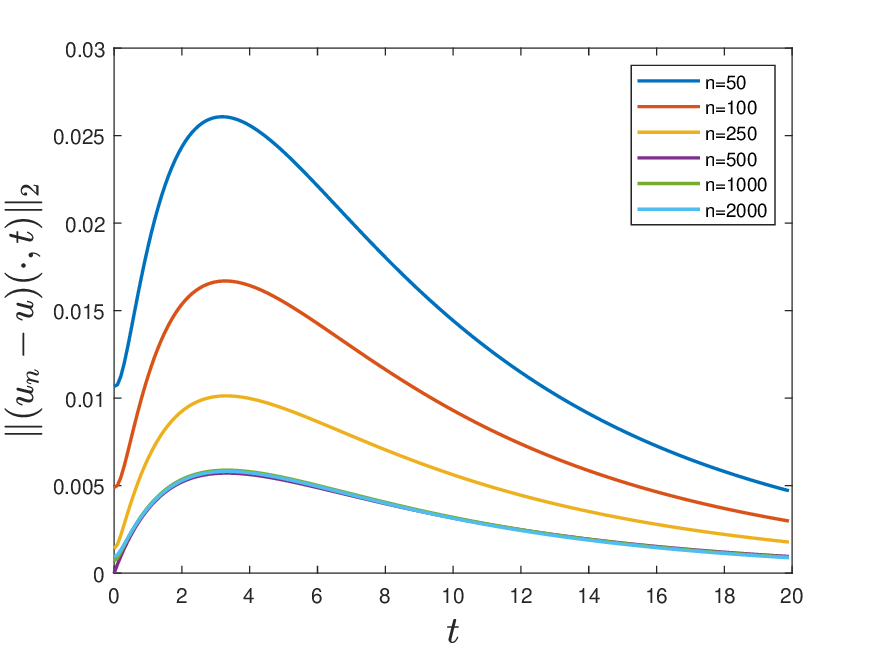}
    \caption{Evolution in time of $\|(u_n-u)(\cdot,t)\|_2$ for the repelling (left) and the opposing dynamics (right).}
    \label{fig:err_L2}
\end{figure*}
In this section we investigate through simulations an example of graphon dynamics. We take inspiration from \cite{aletti2022opinion}, where the authors assume the network to be a stochastic block model describing the interactions between three communities, which can be thought of as left, center and right parties.

In a stochastic block model, the interaction rate between community $i$ and community $j$ is a constant $a_{ij}$, so that the model can be represented by a matrix $A=\{a_{ij}\}$ or equivalently by a piece-wise constant graphon $W(x,y)=\sum_{i,j}a_{ij}\mathds{1}_{C_i}(x)\mathds{1}_{C_j}(y)$, where $\bigsqcup_i C_i=[0,1]$. In the case considered there are three communities, consequently $C_1=[0,\frac13],$ $C_2=(\frac13,\frac23]$ and $C_3=(\frac23,1]$. 
The idea at the base of the model is to have positive interactions within each community, but negative interactions between left and right parties, while the center mediates and has positive interactions with everyone.  
To account for diversity inside a specific community we change the piece-wise constant graphon into a piece-wise linear graphon. This flexibility allows us, for instance, to distinguish between an extremist (i.e. $x\in [0,\epsilon]$ or $x\in [1-\epsilon,1]$ for a small $\epsilon>0$), which will take into consideration mostly the opinion of its own party, and someone more moderate ($x\in [\frac{1}{3}-\epsilon,\frac{1}{3}]$ or $x\in [\frac{2}{3},\frac{2}{3}+\epsilon]$), which will value the opinion of the center party more, and reject the opposing party's opinions less strongly. Similarly, a member of the centre-left will listen to the left party more than the right party, and vice versa for centre-right. 

Let us consider $C_1$, $C_2$ and $C_3$ respectively as left, center and right party. 
On each set $C_i\times C_j$ we define a plane according to the following rationale:
\begin{enumerate}
    \item for $(x,y)\in C_1\times C_1$, we consider the interaction to be stronger between the extremists, diminishing as $x$ and/or $y$ grow,
    \item for $(x,y)\in C_1\times C_2$, the strongest interaction is between moderate left members and centre-left, and decreases as $x$ becomes more extreme or $y$ grows towards centre-right,
    \item for $(x,y)\in C_1\times C_3$, the interaction is negative, greater in absolute value when both $x$ and $y$ are extremists, and more contained when both are moderate,
    \item for $(x,y)\in C_2\times C_2$, is a constant.
\end{enumerate}
The graphon on the remaining sets can be obtained by symmetry, resulting in the graphon in Fig.~\ref{W_plot}.
For our simulations, we will simulate the dynamics on this graphon and on graphs sampled from it with deterministic latent variables.

The results for the repelling dynamics are shown in Fig.~\ref{fig:diverging-solutions}. We can see that, for the chosen parameters, the three communities' opinions diverge and communities clearly separate. However, when this happens, the opinions within each community do not collapse on the community barycenter, like it happens for the piece-wise constant case (see \cite[Sect. 3.1]{aletti2022opinion} for a detailed example): instead, opinions are more spread out, covering a wider range of values. 

The results for the  opposing dynamics are shown in Fig.~\ref{fig:opposing}. Consistently with well-known results for finite graphs~\cite{altafini2012consensus}, the dynamics converges to 0 because the graphon that we have chosen describes a social network that is not structurally balanced: in fact, there are negative interactions between left and right, but the center has positive interactions with everyone.

Another aspect worth pointing out is the difference between the solutions of the dynamics with graphs and graphons, shown in Fig.~\ref{fig:err_L2}. Consistently with the theoretical results, the error decreases as $n$ grows for both dynamics. As per the dependence on $t$, the bound from Theorem~\ref{thm:bound} allows for the error to grow with time: this growth can indeed be observed for the repelling dynamics and, at least during a transient period, also for the opposing dynamics.

\section{Conclusion}\label{sect:conclusion}

In this paper we considered two models of opinion dynamics with antagonistic interactions (the repelling and the opposing model), and we extended them to graphons. We have shown existence and uniqueness of solutions and stated bounds on the error between graph  and graphon solutions. 
We then showed that these bounds go to zero if we take a sequence of graphs sampled from a graphon, as the number of nodes goes to infinity.

These results prove graphons to be a helpful tool to handle large networks  with antagonistic interactions, as we have shown how dynamics on the graphon well approximate their finite dimension respective versions. Future work is aimed at understanding the general properties of the graphon dynamics itself, such as its steady states and behaviour as time goes to infinity.

\bibliographystyle{IEEEtran}
\bibliography{graphonbiblio}

\begin{thebibliography}{10}
\providecommand{\url}[1]{#1}
\csname url@samestyle\endcsname
\providecommand{\newblock}{\relax}
\providecommand{\bibinfo}[2]{#2}
\providecommand{\BIBentrySTDinterwordspacing}{\spaceskip=0pt\relax}
\providecommand{\BIBentryALTinterwordstretchfactor}{4}
\providecommand{\BIBentryALTinterwordspacing}{\spaceskip=\fontdimen2\font plus
\BIBentryALTinterwordstretchfactor\fontdimen3\font minus
  \fontdimen4\font\relax}
\providecommand{\BIBforeignlanguage}[2]{{%
\expandafter\ifx\csname l@#1\endcsname\relax
\typeout{** WARNING: IEEEtran.bst: No hyphenation pattern has been}%
\typeout{** loaded for the language `#1'. Using the pattern for}%
\typeout{** the default language instead.}%
\else
\language=\csname l@#1\endcsname
\fi
#2}}
\providecommand{\BIBdecl}{\relax}
\BIBdecl

\bibitem{proskurnikov2017tutorial}
A.~V. Proskurnikov and R.~Tempo, ``A tutorial on modeling and analysis of
  dynamic social networks. {Part~I},'' \emph{Annual Reviews in Control},
  vol.~43, pp. 65--79, 2017.

\bibitem{proskurnikov2018tutorial}
------, ``A tutorial on modeling and analysis of dynamic social networks.
  {Part~II},'' \emph{Annual Reviews in Control}, vol.~45, pp. 166--190, 2018.

\bibitem{shi2019dynamics}
G.~Shi, C.~Altafini, and J.~S. Baras, ``Dynamics over signed networks,''
  \emph{SIAM Review}, vol.~61, no.~2, pp. 229--257, 2019.

\bibitem{tian2023dynamics}
Y.~Tian and L.~Wang, ``Dynamics of opinion formation, social power evolution,
  and na{\"\i}ve learning in social networks,'' \emph{Annual Reviews in
  Control}, 2023.

\bibitem{bernardo2024bounded}
C.~Bernardo, C.~Altafini, A.~Proskurnikov, and F.~Vasca, ``Bounded confidence
  opinion dynamics: A survey,'' \emph{Automatica}, vol. 159, p. 111302, 2024.

\bibitem{friedkin2015problem}
N.~E. Friedkin, ``The problem of social control and coordination of complex
  systems in sociology: A look at the community cleavage problem,'' \emph{IEEE
  Control Systems Magazine}, vol.~35, no.~3, pp. 40--51, 2015.

\bibitem{altafini2012consensus}
C.~Altafini, ``Consensus problems on networks with antagonistic interactions,''
  \emph{IEEE Transactions on Automatic Control}, vol.~58, no.~4, pp. 935--946,
  2012.

\bibitem{gao2017control}
S.~Gao and P.~E. Caines, ``The control of arbitrary size networks of linear
  systems via graphon limits: An initial investigation,'' in \emph{IEEE
  Conference on Decision and Control}, 2017, pp. 1052--1057.

\bibitem{gao2019graphon}
------, ``Graphon control of large-scale networks of linear systems,''
  \emph{IEEE Transactions on Automatic Control}, vol.~65, no.~10, pp.
  4090--4105, 2019.

\bibitem{nikitin2021continuation}
D.~Nikitin, C.~Canudas-de Wit, and P.~Frasca, ``A continuation method for
  large-scale modeling and control: from {ODEs} to {PDE}, a round trip,''
  \emph{IEEE Transactions on Automatic Control}, vol.~67, no.~10, pp.
  5118--5133, 2021.

\bibitem{maffettone2022continuification}
G.~C. Maffettone, A.~Boldini, M.~Di~Bernardo, and M.~Porfiri,
  ``Continuification control of large-scale multiagent systems in a ring,''
  \emph{IEEE Control Systems Letters}, vol.~7, pp. 841--846, 2022.

\bibitem{lovasz2012book}
L.~Lov{\'a}sz, \emph{Large Networks and Graph Limits}.\hskip 1em plus 0.5em
  minus 0.4em\relax American Mathematical Society, 2012.

\bibitem{avella2018centrality}
M.~{Avella-Medina}, F.~{Parise}, M.~T. {Schaub}, and S.~{Segarra}, ``Centrality
  measures for graphons: Accounting for uncertainty in networks,'' \emph{IEEE
  Transactions on Network Science and Engineering}, vol.~7, no.~1, pp.
  520--537, 2020.

\bibitem{vizuete2020graphon}
R.~Vizuete, P.~Frasca, and F.~Garin, ``Graphon-based sensitivity analysis of
  {SIS} epidemics,'' \emph{IEEE Control Systems Letters}, vol.~4, no.~3, pp.
  542--547, 2020.

\bibitem{belabbas2021h}
M.-A. Belabbas, X.~Chen, and T.~Ba{\c{s}}ar, ``On the {H}-property for
  step-graphons and edge polytopes,'' \emph{IEEE Control Systems Letters},
  vol.~6, pp. 1766--1771, 2021.

\bibitem{petit2021random}
J.~Petit, R.~Lambiotte, and T.~Carletti, ``Random walks on dense graphs and
  graphons,'' \emph{SIAM Journal on Applied Mathematics}, vol.~81, no.~6, pp.
  2323--2345, 2021.

\bibitem{bonnet2022consensus}
B.~Bonnet, N.~Pouradier~Duteil, and M.~Sigalotti, ``Consensus formation in
  first-order graphon models with time-varying topologies,'' \emph{Mathematical
  Models and Methods in Applied Sciences}, vol.~32, no.~11, pp. 2121--2188,
  2022.

\bibitem{chen2023large}
Y.~Chen and T.~Li, ``A large-scale stochastic gradient descent algorithm over a
  graphon,'' in \emph{2023 62nd IEEE Conference on Decision and Control
  (CDC)}.\hskip 1em plus 0.5em minus 0.4em\relax IEEE, 2023, pp. 4806--4811.

\bibitem{vonluxburg2008consistency}
U.~von Luxburg, M.~Belkin, and O.~Bousquet, ``Consistency of spectral
  clustering,'' \emph{The Annals of Statistics}, pp. 555--586, 2008.

\bibitem{vizuete2021laplacian}
R.~Vizuete, F.~Garin, and P.~Frasca, ``The {Laplacian} spectrum of large graphs
  sampled from graphons,'' \emph{IEEE Transactions on Network Science and
  Engineering}, vol.~8, no.~2, pp. 1711--1721, 2021.

\bibitem{bramburger2023pattern}
J.~Bramburger and M.~Holzer, ``Pattern formation in random networks using
  graphons,'' \emph{SIAM Journal on Mathematical Analysis}, vol.~55, no.~3, pp.
  2150--2185, 2023.

\bibitem{ayi2021mean}
N.~Ayi and N.~Pouradier~Duteil, ``Mean-field and graph limits for collective
  dynamics models with time-varying weights,'' \emph{Journal of Differential
  Equations}, vol. 299, pp. 65--110, 2021.

\bibitem{medvedev2014nonlinear}
G.~S. Medvedev, ``The nonlinear heat equation on dense graphs and graph
  limits,'' \emph{SIAM Journal on Mathematical Analysis}, vol.~46, no.~4, pp.
  2743--2766, 2014.

\bibitem{medvedev2014Wrandom}
------, ``The nonlinear heat equation on {W}-random graphs,'' \emph{Archive for
  Rational Mechanics and Analysis}, vol. 212, pp. 781--803, 2014.

\bibitem{lovasz2011subgraph}
L.~Lov{\'a}sz, ``Subgraph densities in signed graphons and the local
  {Simonovits}--{Sidorenko} conjecture,'' \emph{The Electronic Journal of
  Combinatorics}, pp. P127--P127, 2011.

\bibitem{facchetti2011computing}
G.~Facchetti, G.~Iacono, and C.~Altafini, ``Computing global structural balance
  in large-scale signed social networks,'' \emph{Proceedings of the National
  Academy of Sciences}, vol. 108, no.~52, pp. 20\,953--20\,958, 2011.

\bibitem{aletti2022opinion}
G.~Aletti and G.~Naldi, ``Opinion dynamics on graphon: The piecewise constant
  case,'' \emph{Applied Mathematics Letters}, vol. 133, p. 108227, 2022.

\bibitem{degroot1974reaching}
M.~H. DeGroot, ``Reaching a consensus,'' \emph{Journal of the American
  Statistical association}, vol.~69, no. 345, pp. 118--121, 1974.

\bibitem{shi2013agreement}
G.~Shi, M.~Johansson, and K.~H. Johansson, ``How agreement and disagreement
  evolve over random dynamic networks,'' \emph{IEEE Journal on Selected Areas
  in Communications}, vol.~31, no.~6, pp. 1061--1071, 2013.

\bibitem{janson2013graphons}
S.~Janson, \emph{Graphons, cut norm and distance, couplings and
  rearrangements}.\hskip 1em plus 0.5em minus 0.4em\relax {NYJM} {M}onographs,
  2013, vol.~24.

\bibitem{borgs2019Lp}
C.~Borgs, J.~Chayes, H.~Cohn, and Y.~Zhao, ``An {$L^p$} theory of sparse graph
  convergence {I}: Limits, sparse random graph models, and power law
  distributions,'' \emph{Transactions of the American Mathematical Society},
  vol. 372, no.~5, pp. 3019--3062, 2019.

\end{thebibliography}

\end{document}